\newtheorem{theorem}{Theorem}[section]
\newtheorem{lemma}[theorem]{Lemma}
\newtheorem{definition}[theorem]{Definition}
\newtheorem{example}[theorem]{Example}
\begin{document}
\title{Unique Decoding of Hyperderivative Reed-Solomon Codes\footnote{The research was supported by the National Natural Science Foundation of China under the Grants 12222113 and 12441105.}} %The general content of the manuscript, in whole or in part, is not submitted, accepted, or published elsewhere, including conference proceedings.}}
\author{Haojie Gu\footnote{Haojie Gu is with the School of Mathematical Sciences, Capital Normal University, Beijing, China, 100048. Email: 2200502051@cnu.edu.cn.},
    %\and Nan Wang\footnote{Nan Wang is with the School of Mathematical Sciences, Capital Normal University, Beijing, China, 100048. Email: },
	\and Jun Zhang\footnote{Jun Zhang is with the School of Mathematical Sciences, Capital Normal University, Beijing, China, 100048. Email: junz@cnu.edu.cn.}
}

\date{}
\maketitle

\begin{abstract}
	
 Error-correcting codes are combinatorial objects designed to cope with the problem of reliable transmission of information on a noisy channel. A fundamental problem in coding theory and practice is to efficiently decode the received word with errors to obtain the transmitted codeword. In this paper, we consider the decoding problem of Hyperderivative Reed-Solomon (HRS) codes with respect to the NRT metric. Specifically, we propose a Welch-Berlekamp algorithm for the unique decoding of NRT HRS codes.

	\begin{flushleft}
		\textbf{Keywords: HRS codes, MDS codes, NRT metric, Welch-Berlekamp Algorithm} 
	\end{flushleft}
\end{abstract}

\section{Introduction}

Let $\mathbb{F}_{q}$ be a finite field with size $q$ and characteristic $p$. Let $\mathbb{F}_{q}^n$ be the $n$-dimensional vector space over the finite field $\mathbb{F}_{q}$. An error correction code $\mathcal{C}$ is a subset of $\mathbb{F}_{q}^n$. In order to correct the errors that occurred during the transmission, a certain metric is equipped in the space $\mathbb{F}_{q}^n$ according to the type of error of the channel. Two fundamental problems in coding theory are the constructions of optimal codes with respect to bounds of basic parameters and how to decode the received word with noise efficiently. Sometimes one may write a word  in the form of a matrix  in $\mbox{Mat}_{r\times s}(\mathbb{F}_q)$, then there are Hamming metric, rank metric, $P$-metric and so on widely considered in the literature. 

For any vector $ \boldsymbol{x}=(x_1,x_2,\cdots,x_n)\in \mathbb{F}_{q}^n$, the \emph{Hamming weight} $\omega_{H}( \boldsymbol{x})$ of $ \boldsymbol{x}$ is defined to be the number of non-zero coordinates, i.e.,
$\omega_{H}( \boldsymbol{x})=|\left\{i\,|\,1\leqslant i\leqslant n,\,x_i\neq 0\right\}|.$

An $[n,k,d]$-linear code $\mathcal{C}\subseteq \mathbb{F}_{q}^n$ is a $k$-dimensional linear subspace of $\mathbb{F}_{q}^n$ with minimal distance $d=d_{H}(\mathcal{C})$ defined as $$d_{H}(\mathcal{C})=\min\left\{\omega_{H}(c):c\in\mathcal{C}\backslash\{0\}\right\}.$$
For any vector $\boldsymbol{u}\in\mathbb{F}_{q}^n$, the error distance from $\boldsymbol{u}$ to $\mathcal{C}$ is defined as
$$d_{H}(\boldsymbol{u},\mathcal{C})=\min\{d_{H}(\boldsymbol{u},\boldsymbol{v})\,|\,\boldsymbol{v}\in C\},$$
where $d_{H}(\boldsymbol{u},\boldsymbol{v})=|\{i\,|\,u_{i}\neq v_{i},\,1\le i\le n\}|$
is the Hamming distance between vectors $\boldsymbol{u}$ and $\boldsymbol{v}$. The well-known Singleton bound says that $d\leq n-k+1$ for any $[n,k,d]$ linear code $\mathcal{C}$. If $d=n-k+1$, then $\mathcal{C}$ is called a maximum distance separable (MDS) code under the Hamming metric.

\begin{definition}
	Let $\mathcal{A}=\left\{\alpha_{1},\cdots,\alpha_{n}\right\}\subseteq \mathbb{F}_{q}$ be the evaluation set, then the Reed-Solomon code $RS_{k}(\mathcal{A})$ of length $n$ and dimension $k$ is defined as \begin{equation*}
	RS_{k}(\mathcal{A})=\left\{(f(\alpha_{1}),\cdots,f(\alpha_{n})):f(x)\in \mathbb{F}_{q}[x],\deg(f)\leq k-1\right\}.
	\end{equation*}
\end{definition}

One of the most fundamental questions one can ask about Reed-Solomon codes is their decodability. In particular, consider a protocol in which a sender picks $\boldsymbol{c}\in RS_{k}(\mathcal{A})$ and transmits $\boldsymbol{c}$ across a channel, which is received as a corrupted message $\boldsymbol{y}\in \mathbb{F}_{q}^n$.  We seek to understand under what circumstances $\boldsymbol{c}$ can be recovered from $\boldsymbol{y}$. For example, the Welch-Berlekamp decoder~\cite{welch1983error} can efficiently recover $\boldsymbol{c}$ if the Hamming distance between $\boldsymbol{c}$ and $\boldsymbol{y}$ does not exceed the unique decoding radius $\lfloor(d-1)/2\rfloor$. 
%This algorithm is information-theoretically optimal because $RS_{k}(\mathcal{A})$ has distance $n-k+1$, i.e., it attains the Singleton bound\cite{singleton1964maximum} and thus is an MDS code.

Rank metric codes were introduced by Delsarte in 1978~\cite{delsarte1978bilinear} as a set of $m\times n$ matrices over  finite field $\mathbb{F}_{q}$. Let  $\operatorname{Mat}_{m \times n}\left(\mathbb{F}_q\right)$ is the $\mathbb{F}_q$-vector space of $m \times n$ matrices with entries from $\mathbb{F}_q$. For any matrix $A\in\operatorname{Mat}_{m\times n}(\mathbb{F}_{q})$,  the \emph{Rank weight} $\omega_{R}( A)$ of matrix $A$ is the rank of matrix $A$, i.e., $\omega_{R}(A)=\operatorname{rank}(A)$.
Thus, we can define the rank distance between two matrices $A$ and $B$ over $\mathbb{F}_{q}$  as $d_R(A,B) = \omega_{R}(A-B)$. 
The minimum rank-distance of a code $\mathcal{C}\subseteq \operatorname{Mat}_{m\times n}(\mathbb{F}_{q})$ is $$d_R(\mathcal{C})=\min\left\{d_{R}(A,B):A\neq B\in\mathcal{C}\right\}.$$
For a code $\mathcal{C}\subseteq\operatorname{Mat}_{m\times n}(\mathbb{F}_{q})$ with the minimum rank distance $d_{R}(\mathcal{C})\geq d$, the Singleton bound in the rank metric asserts that the number of codewords in $\mathcal{C}$ is upper bounded by $q^{m(n-d+1)}$, see~\cite{gabidulin1985theory}. A code attaining this bound is called a maximal rank distance (MRD) code.

Gabidulin codes~\cite{delsarte1978bilinear,gabidulin1985theory,roth2002maximum}, an important class of MRD codes, can be viewed as the linearized versions of Reed–Solomon codes. They are defined through the evaluation of linearized polynomials. This perspective positions Gabidulin codes as a perfect analogy to Reed-Solomon codes in the rank metric setting. 
\begin{definition}
    For integers $m\geq n\geq k$ and  elements $\alpha_{1},\cdots,\alpha_{n}\in\mathbb{F}_{q^m}$ that are linearly independent over $\mathbb{F}_{q}$, the corresponding Gabidulin codes over $\mathbb{F}_{q^m}$ is defined to be the $[n,k]_{\mathbb{F}_{q^m}}$-linear code
    \begin{equation*}
        \mathcal{G}_{n,k}(\alpha_{1},\cdots,\alpha_{n})=\left\{\left(f(\alpha_{1}),\cdots,f(\alpha_{n})\right):f(X)=\sum\limits_{i=1}^{k}c_{i}X^{q^{i-1}},c_{1},c_{2},\cdots,c_{k}\in\mathbb{F}_{q^m}\right\}\subseteq\mathbb{F}_{q^m}^n.
    \end{equation*}
\end{definition}

Similar to Reed–Solomon codes, one can design highly efficient encoding and unique decoding algorithms for Gabidulin codes by generalizing the Berlekamp-Welch algorithm~\cite{koetter2008coding}.

The  Niederreiter-Rosenbloom-Tsfasman (NRT) codes to be discussed in this paper is also defined in matrix space.
The study of codes for the NRT metric has a long history of parallel developments. A first source comes from a series of papers by Niederreiter ~\cite{niederreiter1987point,niederreiter1991combinatorial,niederreiter1992orthogonal}. In 1997, Rosenbloom and Tsfasman introduced a new metric in the $\mathbb{F}_{q}$-vector space of $r\times s$ matrices with an information-theoretic motivation~\cite{rosenbloom1997codes}. Since then coding-theoretic questions with respect to this metric have been investigated, including MDS codes~\cite{dougherty2004maximum,dougherty2002maximum,skriganov2001coding}, MacWilliams Duality~\cite{dougherty2002macwilliams,ozen2015identity}, structure and decoding of linear codes~\cite{ozen2006linear,panek2010classification,wang2025new}, coverings~\cite{castoldi2015covering} and so on. The study of those metric spaces was also motivated by the study of distributions of points in the unit cube (see for instance~\cite{barg2007bounds1,dougherty2004maximum,skriganov2001coding}). It was eventually realized that this matrix metric corresponds to a $P$-metric where $P$ is the union of $r$ chains of length $s$.

After introducing the NRT metric, a fundamental problem in coding theory is how to maximize the minimum distance after fixing the dimension $k$ and length $n$ with respect to NRT metric. In \cite{rosenbloom1997codes}, Rosenbloom and Tsfasman get the Singleton bound for NRT metric. Interestingly, as an extension of RS codes, Hyperderivative Reed-Solomon (HRS) code can obtain Singleton bound under the NRT metric, which will be introduced next. Surprisingly, HRS code cannot reach the Singleton bound under the Hamming metric. Similar to the RS code under Hamming metric, we naturally care about the decoding problem of HRS code under NRT metric. In this paper,  we especially present the Welch-Berlekamp algorithm, a unique decoding algorithm for HRS codes under the NRT metric.

The organization is as follows. In Section 2, we introduce basic objects, including Hyperderivatives and NRT metrics. In Section 3, for the unique decoding of NRT HRS codes, we extend the Welch-Berlekamp algorithm for RS codes from the Hamming metric to the NRT metric.

\section{Preliminaries and Notation}
\subsection{Hyperderivatives}

Let $\mathcal{P}(t-1)$ represent the space of polynomials $f(X)\in \mathbb{F}_{q}[X]$ with degree at most $t-1$. The following definition may sometimes be referred to as a Hasse derivative, however we will strictly refer to this as a hyperderivative, as used by Skriganov in~\cite{skriganov2001coding}
and also mentioned in Niederreiter and Lidl’s textbook~\cite{lidl1997finite}.

\begin{definition}
	Let $t$ be a positive integer. Let $f(x)\in \mathcal{P}(t-1)$ be given in the form $f(x)=f_0+ f_1x +\cdots+f_{t-1}x^{t-1}$. Let $j<p$, then
	the $j$-th hyperderivative of $f(x)$ is the polynomial defined by 
	\begin{equation}
	\partial^{(j)} f(x)=\binom{0}{j}f_{0}x^{-j}+\binom{1}{j}f_{1}x^{1-j}+\cdots+\binom{t-1}{j}f_{t-1}x^{t-1-j},
	\end{equation}
	where we use the convention that $\binom{i}{j}=\left\{
	\begin{array}{cc}
	\frac{i!}{(i-j)!j!},&\mbox{if}\ 0\leq j\leq i;\\
	0,&\mbox{otherwise.}
	\end{array}\right.$
\end{definition} 

For $\alpha\in \mathbb{F}_{q}$, we denote by $\nu_f (\alpha)$ the
order of vanishing of $f(x)$ at $\alpha$. This means that the highest power of $(x-\alpha)$ that divides $f(x)$ is $\nu_{f}(\alpha)$. In other
words, we have $f(x)=(x-\alpha)^{\nu_{f}(\alpha)}g(x)$
for some polynomial $g(x)$ such that $g(\alpha)\neq 0$.

We mention two direct consequences of this expansion.

$(1)$\  For all $m \in \mathbb{N}$, we have

$$
\nu_f(u)=m \Longleftrightarrow \partial^{(j)} f(u)=0 \text { for all } j \in\{0,1, \ldots, m-1\} \text { but } \partial^{(m)} f(u) \neq 0.
$$

(2) For any integer $a>\deg(f(x))$, we have 
$$
\partial^{(a)} f(x)=0.
$$

\begin{lemma}[\cite{dvir2013extensions}]\label{Lem:Vandermonde identity}
	For any $A(X),B(X)\in \mathbb{F}_{q}[X]$ and $\ell\geq 0$, we have \begin{equation*}
	\partial^{(\ell)}(A(X)B(X))=\sum\limits_{i=0}^{\ell}\partial^{(i)}A(X)\partial^{(\ell-i)}B(X).
	\end{equation*}
\end{lemma}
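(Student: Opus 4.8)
The plan is to exploit bilinearity to reduce the identity to the case of monomials, where it collapses to the classical Vandermonde convolution for binomial coefficients. Since both sides of the claimed equation are $\mathbb{F}_q$-bilinear in the pair $(A(X),B(X))$---each operator $\partial^{(j)}$ is $\mathbb{F}_q$-linear directly from its defining expansion, and ordinary polynomial multiplication is bilinear---it suffices to verify the identity when $A(X)=X^m$ and $B(X)=X^n$ for arbitrary nonnegative integers $m,n$, and then extend by linearity in each argument.

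For such monomials, the definition of the hyperderivative gives $\partial^{(j)}(X^m)=\binom{m}{j}X^{m-j}$, where the convention $\binom{m}{j}=0$ for $j>m$ guarantees that no genuine negative powers appear and that every sum may be taken over the full range $0\le i\le\ell$. The left-hand side is then
$$\partial^{(\ell)}(X^{m+n})=\binom{m+n}{\ell}X^{m+n-\ell},$$
while the right-hand side becomes
$$\sum_{i=0}^{\ell}\partial^{(i)}(X^m)\,\partial^{(\ell-i)}(X^n)=\left(\sum_{i=0}^{\ell}\binom{m}{i}\binom{n}{\ell-i}\right)X^{m+n-\ell}.$$
Comparing coefficients of $X^{m+n-\ell}$, the whole lemma reduces to the single scalar identity $\binom{m+n}{\ell}=\sum_{i=0}^{\ell}\binom{m}{i}\binom{n}{\ell-i}$, which is precisely Vandermonde's identity (hence the label of the lemma).

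I expect the only genuine subtlety to be the characteristic-$p$ bookkeeping rather than any structural difficulty. The hyperderivative coefficients are integer binomials reduced modulo $p$, so I would first state Vandermonde's identity as an identity in $\mathbb{Z}$ and then push it through the reduction homomorphism $\mathbb{Z}\to\mathbb{F}_q$, rather than manipulating the reduced residues directly; this is what makes the argument valid even when $\ell\ge p$. The reduction to monomials and the coefficient comparison are routine, so this transfer of Vandermonde's identity from $\mathbb{Z}$ to $\mathbb{F}_q$ is the one step I would write out with care.
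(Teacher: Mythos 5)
Your proof is correct. Note that the paper does not actually prove this lemma at all---it is quoted with a citation to the reference \cite{dvir2013extensions}---so there is no internal proof to compare against; your argument supplies what the paper delegates to a citation. The route you take (bilinearity of both sides, reduction to the monomial case $A=X^m$, $B=X^n$, and then the coefficient identity $\binom{m+n}{\ell}=\sum_{i=0}^{\ell}\binom{m}{i}\binom{n}{\ell-i}$) is the standard proof of the Leibniz rule for hyperderivatives, and it correctly explains why the lemma carries the name ``Vandermonde identity'' in the paper's label. Your handling of the one real subtlety is also right: the binomial coefficients in the definition are integers reduced modulo $p$, so proving Vandermonde's identity in $\mathbb{Z}$ and pushing it through the ring homomorphism $\mathbb{Z}\to\mathbb{F}_q$ is exactly what makes the argument sound in positive characteristic, including when $\ell\ge p$ (where the paper's own definition, stated only for $j<p$, is silently being extended). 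The zero convention $\binom{m}{j}=0$ for $j>m$ takes care of all range and degree edge cases, so no step fails.
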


 Given $\mathcal{A}=\left\{\alpha_{1},\cdots,\alpha_{r}\right\}\subseteq\mathbb{F}_{q}$ and $\boldsymbol{y}=(y_{i,j})_{1\leq i\leq s\atop 1\leq j\leq r}\in\operatorname{Mat}_{s\times r}(\mathbb{F}_{q})$, we define the Hermite interpolation $H(X;\mathcal{A},\boldsymbol{y})$ as the minimal degree polynomial $H(X)$ which satisfies
\begin{equation*}
    \partial^{(i-1)}H(\alpha_{j})=y_{i,j}
\end{equation*}
for all $1\leq i\leq s,1\leq j\leq r$. Note that the Hermite interpolation is an extension of the Lagrange interpolation. Von Zur Gathen and Gerhard~\cite{von2003modern} presented a fast implementation of Hermite interpolation. 

Denote by $M(n)$ the minimal number of arithmetic operations necessary to multiply two polynomials of degree $n$. If the ground field supports fast Fourier transform then $M(n)\leq O(n\log n)$, otherwise $M(n)\leq O(n \log n \log\log n)$~\cite[Theorem~8.18, Theorem~8.22]{von2003modern}.
%Next, for $\boldsymbol{y}\in\left(y_{i,j}\right)_{1\leq i\leq s\atop 1\leq j\leq r}\in Mat_{s\times r}(\mathbb{F}_{q})$ and $\mathcal{A}=\left\{\alpha_{1},\alpha_{2},\cdots,\alpha_{r}\right\}\subseteq\mathbb{F}_{q}$, we provide an algorithm
%for finding a polynomial $H(X)\in\mathcal{P}(rs-1)$ that satisfies $\partial^{(j-1)} H(\alpha_{i})=y_{j,i}$ for all $1\leq i\leq r,1\leq j\leq s$.

%\begin{algorithm}
%	\small{	\DontPrintSemicolon
%		\SetAlgoLined
%		\KwIn{$r,s\geq 1,\text{r-subset} \ \mathcal{A}=\left\{\alpha_{1},\cdots,\alpha_{r}\right\}\subseteq\mathbb{F}_{q},\boldsymbol{y}=(y_{i,j})\in Mat_{s\times r}(\mathbb{F}_{q})$}
%		\KwOut{Polynomial $H(X)$ of a degree at most $rs-1$ }
		%\text{Choose $H(x)\in \mathcal{P}(rs-1)$ such that $\partial^{(j-1)}H(\alpha_{i})=y_{j,i}$ for all $1\leq j\leq s,1\leq i\leq r$}\;
 %       \For{$j=1,2,\cdots,r$}{
  %      $H_{j}(X)=\sum\limits_{i=1}^{s}y_{i,j}(X-\alpha_{j})$\;
  %      $M_{j}(X)=\prod\limits_{1\leq\ell\leq r\atop \ell\neq j}(X-\alpha_{\ell})^s$\;
  %      \text{Let} $M_{j}^{'}(X)\in\mathcal{P}(s-1)$ that satisfies $M_{j}^{'}(X)\cdot M_{j}(X)\equiv 1\ (\mod (X-\alpha_{j})^s)$\;}
   %     \text{Let} $H(X)=\sum\limits_{\ell=1}^{r}M_{\ell}^{\prime}(X)M_{\ell}(X)H_{\ell}(X)(\mod \prod\limits_{1\leq\ell\leq r} (X-\alpha_{\ell})^s$) \;\label{Algo:0.4}
%		Return $H(X)$\;}
%	\caption{Hermite interpolation Algorithm}\label{Algo:0}
%	\end{algorithm}

\begin{theorem}[\cite{von2003modern}]\label{The:2.3}
	Let $r,s$ be positive integers satisfying $s\leq p$, $\mathcal{A}=\left\{\alpha_{1},\cdots,\alpha_{r}\right\}\subseteq \mathbb{F}_{q}$ and $\boldsymbol{y}=(y_{i,j})_{1\leq i\leq s\atop 1\leq j\leq r}\in \operatorname{Mat}_{s\times r}(\mathbb{F}_{q})$. There exists a unique 
    polynomial $H(X)\in\mathcal{P}(rs-1)$ that  satisfies $\partial^{(j-1)}H(\alpha_{i})=y_{j,i}$ for all $1\leq i\leq r,1\leq j\leq s$. Moreover, $H(X;\mathcal{A},\boldsymbol{y})$ can be computed in time $O(M(sr)\log(sr))$. 
\end{theorem}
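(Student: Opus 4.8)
The plan is to recast the hyperderivative interpolation conditions as a single Chinese Remainder Theorem (CRT) system over $\mathbb{F}_q[X]$; this framing delivers existence, uniqueness, and the running time in one stroke. The starting point is the Hasse--Taylor expansion: for any $f(X)\in\mathbb{F}_q[X]$ and $\alpha\in\mathbb{F}_q$,
\[
f(X)=\sum_{j\geq 0}\big(\partial^{(j)}f\big)(\alpha)\,(X-\alpha)^{j},
\]
which follows from the definition of the hyperderivative after the substitution $X\mapsto X+\alpha$ (the sum is finite since $\partial^{(j)}f=0$ for $j>\deg f$ by consequence~(2)). Because $s\leq p$, every hyperderivative $\partial^{(0)},\ldots,\partial^{(s-1)}$ occurring in the conditions is well defined. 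Reducing the expansion modulo $(X-\alpha_i)^s$ then shows that the prescribed data $\partial^{(j-1)}H(\alpha_i)=y_{j,i}$ for $1\leq j\leq s$ is equivalent to the single congruence
\[
H(X)\equiv R_i(X)\pmod{(X-\alpha_i)^{s}},\qquad R_i(X):=\sum_{j=0}^{s-1}y_{j+1,i}\,(X-\alpha_i)^{j}.
\]

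For existence and uniqueness, note the moduli $(X-\alpha_i)^s$ are pairwise coprime since the $\alpha_i$ are distinct, and their product has degree $rs$. The CRT then furnishes an isomorphism $\mathbb{F}_q[X]/\big(\prod_{i}(X-\alpha_i)^{s}\big)\cong\prod_{i}\mathbb{F}_q[X]/(X-\alpha_i)^{s}$, whose unique representative of degree $<rs$ is exactly the desired $H\in\mathcal{P}(rs-1)$. Equivalently, one may argue injectivity directly: if all $\partial^{(j-1)}H(\alpha_i)=0$, then consequence~(1) forces $\nu_H(\alpha_i)\geq s$ for every $i$, so $\prod_{i}(X-\alpha_i)^{s}\mid H$; since this product has degree $rs>\deg H$, we get $H=0$, and a dimension count on the $rs$-dimensional space $\mathcal{P}(rs-1)$ promotes injectivity to bijectivity. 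Here Lemma~\ref{Lem:Vandermonde identity} underlies the evaluation of $\partial^{(\ell)}\big((X-\alpha_i)^s g\big)$ at $\alpha_i$, and the hypothesis $s\leq p$ is precisely what keeps the relevant binomial coefficients $\binom{s}{\ell}$ available.

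For the complexity, I would invoke the fast Chinese remaindering algorithm of Von zur Gathen and Gerhard~\cite{von2003modern}, organized around a subproduct tree of the moduli. Assembling the local residues $R_i$ from the data $(y_{j,i})$, building the subproduct tree of $\{(X-\alpha_i)^s\}$, and executing the CRT recombination all fit within $O(M(N)\log N)$ field operations, where $N=\deg\prod_{i}(X-\alpha_i)^{s}=rs$; this yields the stated bound $O(M(sr)\log(sr))$. The main difficulty is bookkeeping rather than conceptual: one must check that converting hyperderivative values into the residues $R_i$ and forming the subproduct tree do not exceed the cost of the recombination step, and that the dictionary between hyperderivative data and congruences is lossless, which is exactly guaranteed by $s\leq p$.
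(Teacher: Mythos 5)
Your proposal is correct and follows essentially the same route as the paper's proof: both convert the hyperderivative conditions at each $\alpha_i$ into a local Taylor polynomial $R_i(X)=\sum_{j=0}^{s-1}y_{j+1,i}(X-\alpha_i)^j$ (the paper's $H_i$), combine them via the Chinese Remainder Theorem modulo $\prod_i (X-\alpha_i)^s$, obtain uniqueness from the divisibility-versus-degree argument, and delegate the $O(M(rs)\log(rs))$ bound to the fast CRT machinery of von zur Gathen and Gerhard. The only cosmetic difference is that you invoke the CRT isomorphism abstractly (with an injectivity/dimension-count variant), whereas the paper writes out the explicit recombination $H=\sum_\ell M_\ell' M_\ell H_\ell$.
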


\subsection{The NRT Metric}

Historically, Brualdi \textit{et al} introduced the concept of a poset metric ($P$-metric) in~\cite{brualdi1995codes}. Further background and development in $P$-metrics is referenced in~\cite{firer2018poset} and~\cite{skriganov2001coding}.  As a special case of $P$-metrics, Niederreiter first introduced NRT-metric in his 1987 paper~\cite{niederreiter1987point}, developing it within the context of uniform distributions of points in Euclidean space. Subsequently, Rosenbloom and Tsfasman independently applied this same metric in their work~\cite{rosenbloom1997codes}, where they referred to it as the $m$-metric and formally introduced its usage into coding theory.

%Now let $C(s,r)$ be a poset
%that consists of $r$ disjoint chains of cardinality $s$ (in other words each chain has $s$
%vertices). We label the vertices of $C(s,r)$ from $1$ to $rs$ in the following way:

%$(1)$\  Start with the leftmost chain;

%$(2)$\  Label the vertices from bottom to top;

%$(3)$\  Move to the next chain and repeat.

%We will refer to $C(s,r)$ as the NRT Poset. Let’s look at a few examples of the NRT poset.

%\begin{example}
%	$(1)$\ Consider $C(3,2)$ with the relations $1\leq 3\leq 5$; $2\leq 4\leq 6$.Hasse
%	Diagram is given in the figure below.
	
%	$(2)$\ Consider $C(2,4)$ with the poset relations $1\leq 5$; $2\leq 6$; $3\leq 7$; $4\leq 8$.
	
%		\begin{figure}[h!]
%			\centering \includegraphics[width=0.3\linewidth]{Hasse diagram2.png} \caption{The Hasse diagram of $C(3,2)$} 
			%\label{fig-sample}
%		\end{figure}
%			\begin{figure}[h!]
				%\centering \includegraphics[width=0.5\linewidth]{Hasse diagram3.png} \caption{The Hasse diagram of $C(2,4)$} 
				%\label{fig-sample}
			%\end{figure}
	
%\end{example}

Let $r$ and $s$ be positive integers. 
The NRT-metric on $\operatorname{Mat}_{s \times r}\left(\mathbb{F}_q\right)$ is defined as follows.

Let $A:=\left(a_{i j}\right)_{\substack{i=1, \ldots, s \\ j=1, \ldots, r}} \in \operatorname{Mat}_{s \times r}\left(\mathbb{F}_q\right)$. Let $A_1, \ldots, A_r$ denote the columns of $A$. The NRT-weight of the column $A_j$, denoted $\omega_{N}\left(A_j\right)$, is defined to be 
\begin{equation*}
\left\{
\begin{array}{cc}
s-i+1,&if\  A_{j}\neq \boldsymbol{0} \\
0,&if\ A_{j}=\boldsymbol{0}
\end{array}
\right.,\end{equation*}
where $i$ is the index of the first nonzero entry of $A_j$ from above to betow. Then the NRT-weight of the matrix $A$, denoted by $\omega_{N}(A)$, is defined to be the sum of NRT-weights of all columns of $A$. That is,
$$
\omega_{N}(A)=\omega_{N}\left(A_1\right)+\cdots+\omega_{N}\left(A_r\right).
$$
Next, the NRT-distance of two matrices $A$ and $B$ is defined by
$$d_{N}(A,B)=\omega_{N}(A-B).
$$
In~\cite{brualdi1995codes}, Brualdi \textit{et al} showed that $P$-distance is in fact a metric on $\mathbb{F}_{q}^n$. Thus,
it is easy to check that NRT-distance is a metric on $\operatorname{Mat}_{s\times r}(\mathbb{F}_{q})$. 

A code that uses the $\operatorname{NRT}$-metric (instead of the Hamming) will be referred to as a
NRT Code or $\operatorname{NRT}$-code. Under the $\operatorname{NRT}$ metric there is an analogue for the minimum distance. Let $\mathcal{C}$ be a subset of $\operatorname{Mat}_{s\times r}(\mathbb{F}_{q})$. Denote by $d_{N}(\mathcal{C})$ the minimum nonzero weight $\omega_{N}(V-W)$,
where $V\neq W\in \mathcal{C}$. There is an analogue of the Singleton bound~\cite{hyun2008maximum} for the
$\operatorname{NRT}$ metric as follows:
$$\left|\mathcal{C}\right|\leq q^{n-d_{N}(\mathcal{C})+1}.$$
In particular, if $\mathcal{C}$ is a linear $[rs,t]$ $\operatorname{NRT}$-code, we have $$d_{N}(\mathcal{C})\leq rs-t+1.$$
We will refer to linear $\operatorname{NRT}$-codes as $[rs,t,d_N]$ codes.  If $d_{N}=rs-t+1$, then we call this code an MDS $\operatorname{NRT}$-metric code.

\subsection{Hyperderivative Reed-Solomon Codes}

Suppose positive integers $r,s$, and $t$ satisfy $s\leq p,r\leq q$ and $t\leq rs$. The construction is based on an $r$-tuple $\mathcal{A}= (\alpha_1,\cdots,\alpha_r)\in \mathbb{F}_q^r$, called evaluation points, and an $s\times r$ matrix $V= (v_{ij})_{i=1,\cdots,s\atop j=1,\cdots,r}$ with entries in $\mathbb{F}_{q}^*$, called the multiplier matrix. The Hyperderivative Reed-Solomon (HRS) code, denoted by $HRS(\mathcal{A}, V,t-1)$, is defined as the image of the following evaluation map:
\begin{equation*}
\begin{aligned}
  EV_{\mathcal{A},V}:\mathcal{P}(t-1)&\rightarrow\ \operatorname{Mat}_{s\times r}(\mathbb{F}_{q})\\
f(z)&\mapsto \begin{pmatrix}
v_{1,1}\partial^{(0)}f(\alpha_{1})&v_{1,2}\partial^{(0)}f(\alpha_{2})
&\cdots&v_{1,r}\partial^{(0)}f(\alpha_{r})\\
v_{2,1}\partial^{(1)}f(\alpha_{1})&v_{2,2}\partial^{(1)}f(\alpha_{2})
 &\cdots&v_{2,r}\partial^{(1)}f(\alpha_{r})\\
\vdots&\vdots&\vdots&\vdots\\
v_{s,1}\partial^{(s-1)}f(\alpha_{1})&v_{s,2}\partial^{(s-1)}f(\alpha_{2})
&\cdots&v_{s,r}\partial^{(s-1)}f(\alpha_{r})\\
\end{pmatrix}
\end{aligned},
\end{equation*}
where $\partial^{(i)} f$ stands for the $i$-th hyperderivative of $f$.

When the multiplier matrix $V$ is the all-ones matrix, the definition was first introduced by Rosenbloom and Tsfasman in their seminal paper~\cite{rosenbloom1997codes}. 

Next, we present the formula for the NRT-weight of a codeword $EV_{\mathcal{A},V}(f)\in\operatorname{Mat}_{s\times r}(\mathbb{F}_{q})$ for arbitrary $s\leq p$.

\begin{lemma}[\cite{skriganov2001coding}]\label{the weight of HRS}
	Let $A$ denote the matrix $EV_{\mathcal{A},V}(f)$, where $f\in \mathcal{P}(t-1)$. Let $A_j$ denotes
	the $j$-th column of $A$, then the NRT-weight of $A_{j}$ is given by
	\begin{equation*}
	\omega_{N}(A_{j})=s-\nu_{f}(\alpha_{j}).
	\end{equation*}
	Therefore, the NRT-weight of the matrix $A$ is given by
	\begin{equation*}
	\omega_{N}(EV_{\mathcal{A},V}(f))=\sum\limits_{j=1}^{r}(s-\nu_{f}(\alpha_{j}))=sr-\sum\limits_{j=1}^{r}\nu_{f}(\alpha_{j}).
	\end{equation*}
\end{lemma}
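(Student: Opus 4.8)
The plan is to establish the per-column formula first and then sum over the $r$ columns. Fix a column index $j\in\{1,\ldots,r\}$. The $i$-th entry of the column $A_j$ is $v_{i,j}\,\partial^{(i-1)}f(\alpha_j)$, and since every multiplier lies in $\mathbb{F}_q^*$, this entry vanishes if and only if $\partial^{(i-1)}f(\alpha_j)=0$. Thus the zero/nonzero pattern of $A_j$ is governed entirely by the vanishing pattern of the hyperderivatives $\partial^{(0)}f(\alpha_j),\ldots,\partial^{(s-1)}f(\alpha_j)$; the nonzero multipliers are irrelevant to locating the first nonzero coordinate, so without loss of generality I may reason as if $V$ were the all-ones matrix.

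Next I would invoke consequence (1) of the hyperderivative expansion recorded above (which itself rests on the Hasse--Taylor identity $f(x)=\sum_{\ell\geq 0}\partial^{(\ell)}f(\alpha_j)\,(x-\alpha_j)^\ell$, but which I am free to take as given). Writing $m=\nu_f(\alpha_j)$, it yields $\partial^{(0)}f(\alpha_j)=\cdots=\partial^{(m-1)}f(\alpha_j)=0$ while $\partial^{(m)}f(\alpha_j)\neq 0$. Translating through the identification of the previous step, the top $m$ entries of $A_j$ vanish and the $(m+1)$-th entry is nonzero; here the standing hypothesis $s\leq p$ guarantees that all hyperderivatives in play are well defined via the binomial convention.

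The remaining work is the index bookkeeping in the definition of the NRT column weight. If $m\leq s-1$, then $A_j\neq\boldsymbol{0}$ and its first nonzero coordinate sits in row $i=m+1$, so by definition $\omega_N(A_j)=s-i+1=s-(m+1)+1=s-m=s-\nu_f(\alpha_j)$. If instead $m\geq s$, then all $s$ coordinates of $A_j$ vanish, whence $A_j=\boldsymbol{0}$ and $\omega_N(A_j)=0$; this matches the stated formula under the natural convention that $s-\nu_f(\alpha_j)$ is read as truncated at $0$ (equivalently, $\nu_f(\alpha_j)$ is capped at $s$). Finally, summing the per-column identity over $j=1,\ldots,r$ and using that the NRT weight of a matrix is the sum of its column weights gives $\omega_N(EV_{\mathcal{A},V}(f))=\sum_{j=1}^{r}(s-\nu_f(\alpha_j))=sr-\sum_{j=1}^{r}\nu_f(\alpha_j)$, as claimed.

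I do not expect a serious obstacle, since the statement is essentially a dictionary between the order of vanishing, the vanishing pattern of hyperderivatives (consequence (1)), and the definition of NRT column weight. The two points demanding genuine care are the off-by-one in the row indexing (row $i$ carries the $(i-1)$-th hyperderivative, so order of vanishing $m$ pushes the first nonzero entry down to row $m+1$, producing weight $s-m$ rather than $s-m+1$), and the boundary case $\nu_f(\alpha_j)\geq s$, where the column degenerates to zero and the formula must be interpreted with the truncation above rather than taken literally.
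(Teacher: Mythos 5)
The paper never proves this lemma at all---it is imported as a known result from \cite{skriganov2001coding}---so there is no internal proof to compare yours against; your argument has to stand on its own, and it does. It is the natural (and essentially only) argument: the multipliers $v_{i,j}\in\mathbb{F}_q^{*}$ make the zero pattern of the column $A_j$ coincide with the vanishing pattern of $\partial^{(0)}f(\alpha_j),\ldots,\partial^{(s-1)}f(\alpha_j)$; consequence (1) of the hyperderivative expansion identifies the index of the first nonvanishing hyperderivative with $m=\nu_f(\alpha_j)$; and the NRT column-weight definition turns ``first nonzero entry in row $m+1$'' into $\omega_{N}(A_j)=s-(m+1)+1=s-m$. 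Summing over the $r$ columns finishes the proof. Your off-by-one bookkeeping is exactly right.

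Your flag of the boundary case $\nu_f(\alpha_j)\geq s$ is not pedantry but a genuine (small) correction to the statement as written. For instance, if $r\geq 2$, $t\geq s+2$ and $f=(x-\alpha_1)^{s+1}$, then $f\in\mathcal{P}(t-1)$ and the first column of $EV_{\mathcal{A},V}(f)$ is zero, so its NRT-weight is $0$, while the displayed formula would give $-1$; the lemma must be read as $\omega_{N}(A_j)=s-\min\left(s,\nu_f(\alpha_j)\right)$, precisely the truncation you propose. This reading is also the one the paper silently relies on later: in Lemma~\ref{Lem: exist W-B algorithm}, the polynomial $E_{1}(X)=X^{e-\Delta}\prod\limits_{i}(x-\alpha_{i})^{s-\nu_{Q}(\alpha_{i})}$ is only a polynomial of degree $e$ when the exponents $s-\nu_{Q}(\alpha_{i})$ are nonnegative (an error-free column forces $\nu_Q(\alpha_i)\geq s$, possibly strictly), so the same capping is needed there. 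And the MDS property (Theorem~\ref{HRS is MDS}) is unaffected, since truncation can only increase the weight, giving $\omega_{N}(EV_{\mathcal{A},V}(f))\geq rs-\sum_{j=1}^{r}\nu_f(\alpha_j)\geq rs-t+1$ for nonzero $f$ in any case.
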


It was shown that HRS codes are MDS codes under the NRT metric.
\begin{theorem}[\cite{skriganov2001coding}]\label{HRS is MDS}
	Let $\mathcal{A}=\left\{\alpha_{1},\cdots,\alpha_{r}\right\}\subseteq\mathbb{F}_{q}$ be a set of $r$ distinct elements of $\mathbb{F}_{q}$ and $V\in (\mathbb{F}_{q}^{*})^{s\times r}$. Then the HRS code $HRS(\mathcal{A}, V,t-1)$ is a $t$-dimensional MDS NRT-metric code.
\end{theorem}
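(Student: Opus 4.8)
The plan is to establish the two claims of Theorem~\ref{HRS is MDS} separately: first that the evaluation map $EV_{\mathcal{A},V}$ is injective (so the code has the full dimension $t = \dim \mathcal{P}(t-1)$), and then that its minimum NRT-distance equals $rs - t + 1$, which combined with the Singleton bound forces the MDS property. Since $EV_{\mathcal{A},V}$ is $\mathbb{F}_q$-linear (hyperderivatives are linear and scaling each entry by a fixed $v_{i,j}$ is linear), the code is a linear $[rs,t]$ NRT-code once injectivity is shown, and its minimum distance is $\min_{f \neq 0} \omega_N(EV_{\mathcal{A},V}(f))$. The entire weight analysis will rest on Lemma~\ref{the weight of HRS}.

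For injectivity, I would examine the kernel. Suppose $f \in \mathcal{P}(t-1)$ satisfies $EV_{\mathcal{A},V}(f) = 0$. Since every multiplier $v_{i,j} \in \mathbb{F}_q^*$ is nonzero, this forces $\partial^{(i-1)} f(\alpha_j) = 0$ for all $1 \le i \le s$ and $1 \le j \le r$. By consequence $(1)$ of the hyperderivative expansion, the vanishing of $\partial^{(0)}f(\alpha_j), \ldots, \partial^{(s-1)}f(\alpha_j)$ gives $\nu_f(\alpha_j) \ge s$ for each $j$, so $(x-\alpha_j)^s \mid f(x)$. As the $\alpha_j$ are distinct, $\prod_{j=1}^r (x-\alpha_j)^s$ divides $f(x)$; this product has degree $rs$, whereas $\deg f \le t-1 < rs$. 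Hence $f = 0$, the map is injective, and $\dim \mathcal{C} = t$.

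For the distance, the key step is to maximize $\sum_{j=1}^r \nu_f(\alpha_j)$ over nonzero $f \in \mathcal{P}(t-1)$. Because the $\alpha_j$ are distinct, $\prod_{j=1}^r (x-\alpha_j)^{\nu_f(\alpha_j)}$ divides $f(x)$, so
\begin{equation*}
\sum_{j=1}^r \nu_f(\alpha_j) \le \deg f \le t-1.
\end{equation*}
Feeding this into the weight formula of Lemma~\ref{the weight of HRS} yields, for every nonzero $f$,
\begin{equation*}
\omega_N(EV_{\mathcal{A},V}(f)) = rs - \sum_{j=1}^r \nu_f(\alpha_j) \ge rs - (t-1) = rs - t + 1,
\end{equation*}
so $d_N(\mathcal{C}) \ge rs - t + 1$. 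The Singleton bound recorded earlier in the excerpt gives the reverse inequality $d_N(\mathcal{C}) \le rs - t + 1$, and the two together establish $d_N(\mathcal{C}) = rs - t + 1$, i.e.\ that $\mathcal{C}$ is an MDS NRT-metric code.

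The argument is short once Lemma~\ref{the weight of HRS} is available, so there is no genuine obstacle; the only point demanding care is the repeated use of distinctness of the $\alpha_j$, which is what converts pointwise orders of vanishing into a single divisibility statement and thereby into a degree bound. I would also note that the Singleton direction needs no explicit extremal codeword: equality follows purely from sandwiching the two bounds, though one could exhibit $f(x) = \prod_{j}(x-\alpha_j)^{m_j}$ with $\sum_j m_j = t-1$ and each $m_j \le s$ to witness tightness directly.
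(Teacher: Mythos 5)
Your proof is correct; note, though, that the paper itself contains no proof of this theorem---it is quoted from Skriganov~\cite{skriganov2001coding} as a known result---so there is nothing in the text to compare against line by line. Your argument is the standard one, assembled exactly from the two ingredients the paper does record: the weight formula of Lemma~\ref{the weight of HRS} and the NRT Singleton bound. Both halves check out. Injectivity holds because a kernel element of $EV_{\mathcal{A},V}$ has all hyperderivatives of orders $0,\dots,s-1$ vanishing at each $\alpha_j$ (the multipliers $v_{i,j}$ being nonzero), hence is divisible by $\prod_{j=1}^{r}(x-\alpha_j)^s$, which has degree $rs>t-1$, forcing $f=0$; and for nonzero $f$ the divisibility $\prod_{j}(x-\alpha_j)^{\nu_f(\alpha_j)}\mid f(x)$ gives $\sum_{j}\nu_f(\alpha_j)\le t-1$, whence $\omega_N(EV_{\mathcal{A},V}(f))\ge rs-t+1$, and the Singleton bound closes the sandwich. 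One technical remark: Lemma~\ref{the weight of HRS} read literally gives $\omega_N(A_j)=s-\nu_f(\alpha_j)$, which tacitly assumes $\nu_f(\alpha_j)\le s$; if the vanishing order at some $\alpha_j$ exceeds $s$, that column weight is $0$, which is larger than the (negative) formula value, so the inequality $\omega_N(EV_{\mathcal{A},V}(f))\ge rs-\sum_{j}\nu_f(\alpha_j)$ that your argument actually needs still holds. Your closing observation that tightness can be witnessed directly by $f(x)=\prod_{j}(x-\alpha_j)^{m_j}$ with $\sum_{j}m_j=t-1$ and each $m_j\le s$ is also correct, and yields a proof of the MDS property that does not invoke the Singleton bound at all.
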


\section{The Welch-Berlekamp Algorithm of NRT HRS codes}

Consider the $[sr,t,d=sr-t+1]_q$ NRT HRS code with evaluation points $\mathcal{A}=\left\{\alpha_{1},\cdots,\alpha_{r}\right\}$ and $s\times r$ matrix $V=\boldsymbol{1}_{s\times r}$.   Let $EV_{\mathcal{A},\boldsymbol{1}}(P(X))$ be the transmitted codeword and  $\boldsymbol{y}=(y_{i,j})_{1\leq i\leq s\atop 1\leq j\leq r}\in \operatorname{Mat}_{s\times r}(\mathbb{F}_{q})$ be the received word, where $P(X)\in\mathcal{P}(t-1)$. The fundamental problem is how to recover the correct codeword $EV_{\mathcal{A},\boldsymbol{1}}(P(X))$ or a codeword $EV_{\mathcal{A},\boldsymbol{1}}(Q(X))$ closest from the received vector $\boldsymbol{y}$. The main result of this paper is to give a decoding algorithm (Algorithm~\ref{Algo:1}) which entends the Welch-Berlekamp Algorithm of RS codes under the Hamming metric to HRS codes under NRT metric and corrects up to $e<\frac{rs-t+1}{2}$ errors in polynomial time. We will call Algorithm~\ref{Algo:1} the Welch-Berlekamp Algorithm of NRT HRS codes. We point out that when $s=1$, our algorithm degenerates the original Welch-Berlekamp Algorithm of RS codes.

\begin{algorithm}
	\small{	\DontPrintSemicolon
		\SetAlgoLined
		\KwIn{$r,s\geq 1,1\leq t\leq rs,0<e<\frac{rs-t+1}{2},r\text{-subset} \ \mathcal{A}=\left\{\alpha_{1},\cdots,\alpha_{r}\right\},\boldsymbol{y}=(y_{i,j})\in \operatorname{Mat}_{s\times r}(\mathbb{F}_{q})$}
		\KwOut{A polynomial $P(X)$ of a degree at most $t-1$ or fail}
		%\text{Choose $H(x)\in \mathcal{P}(rs-1)$ such that $\partial^{(j-1)}H(\alpha_{i})=y_{j,i}$ for all $1\leq j\leq s,1\leq i\leq r$}\;
		\text { Compute a non-zero polynomial } $E(X)$
        \text { of degree exactly $e$}, \text {and a polynomial $N(X)$}  \text { of degree } \text { at most $e+t-1$ such that for all } $1\leq i\leq r$, we all have
		\begin{equation}\label{Algo1:step 1}
		\left\{
		\begin{array}{rl}
		N(\alpha_{i})=&y_{1,i}E(\alpha_{i})\\
		\partial^{(1)}N(\alpha_{i})=&\sum\limits_{j=1}^{2}y_{j,i}\partial^{(2-j)}E(\alpha_{i})\\
		\vdots\\
		\partial^{(s-1)}N(\alpha_{i})=&\sum\limits_{j=1}^{s}y_{j,i}\partial^{(s-j)}E(\alpha_{i})\\
		\end{array}\right.
		\end{equation}\label{Algo1:Step 1}\;
		\If{ E(X) \text { and } N(X) \text { as above do not exist or } E(X) \text { does not divide } N(X)}{
			\text{Return fail}}
		$P(X)\leftarrow\frac{N(X)}{E(X)}$\label{step:5}\;
		\If{$\omega_{N}(EV_{\mathcal{A},\boldsymbol{1}}(P(X))-\boldsymbol{y})>e$}{\text{Return fail}}
		\ElseIf{$\omega_{N}(EV_{\mathcal{A},\boldsymbol{1}}(P(X))-\boldsymbol{y})\leq e$}{Return $P(X)$}}
	\caption{Welch-Berlekamp Algorithm of NRT HRS codes}\label{Algo:1}
	\end{algorithm}
	
	Before verifying the correctness of the Welch-Berlekamp Algorithm of NRT HRS codes, let us give an example to illustrate the algorithm.

	\begin{example}
		Let $r=4,s=2,t=4,e\leq\lfloor\frac{rs-t}{2}\rfloor=2$. Let $q=7,\mathcal{A}=\left\{1,2,3,4\right\}\subseteq \mathbb{F}_{7}$. Choose a polynomial $P(X)=X^3+3X^2+2X+5\in \mathcal{P}(3)$ which gives the codeword
        \begin{equation*}
		EV_{\mathcal{A},\boldsymbol{1}}(P(X))=\begin{pmatrix}
		P(1)&P(2)&P(3)&P(4)\\ \partial^{(1)}P(1)&\partial^{(1)}P(2)&\partial^{(1)}P(3)&\partial^{(1)}P(4)\\
		\end{pmatrix}=\begin{pmatrix}
		4&1&2&6\\4&5&5&4
		\end{pmatrix}.
		\end{equation*}
        Suppose the following error with NRT-weight $2$ occured
        $$\begin{pmatrix}
		0&0&0&0\\1&0&1&0
		\end{pmatrix}.$$ So the received word is 
        $$\boldsymbol{y}=\begin{pmatrix}
		4&1&2&6\\
		5&5&6&4
		\end{pmatrix}.$$ 
		Let $N(X)=a_{0}+a_{1}X+a_{2}X^2+a_{3}X^2+a_{4}X^4+a_{5}X^5$ and $E(X)=b_{0}+b_{1}X+X^2$. From Equation~(\ref{Algo1:step 1}) of Algorithm~\ref{Algo:1}, we have
		$$\left\{
		\begin{array}{ll}
		\sum\limits_{j=0}^{5}a_{j}\alpha_{i}^j-y_{1,i}b_{0}-y_{1,i}\alpha_{i}b_{1}=y_{1,i}\alpha_{i}^2,&i=1,2,3,4\\
		\sum\limits_{j=1}^{5}ja_{j}\alpha_{i}^{j-1}-y_{2,i}b_{0}-(y_{1,i}+y_{2,i}\alpha_{i})b_{1}=2y_{1,i}\alpha_{i}+y_{2,i}\alpha_{i}^2,&i=1,2,3,4
		\end{array}\right..$$
		In other words, we have the following system of linear equations: $$M\boldsymbol{x}=A,$$ 
		where $\boldsymbol{x}=(a_{0},a_{1},\cdots,a_{5},b_{0},b_{1})^T,A=(4,4,4,5,6,3,3,0)^T$ and
		\begin{equation*}
		M=\begin{pmatrix}
		1&1&1&1&1&1&3&3\\
		1&2&4&1&2&4&6&5\\
		1&3&2&6&4&5&5&1\\
		1&4&2&1&4&2&1&4\\
		0&1&2&3&4&5&2&5\\
		0&1&4&5&4&3&2&3\\
		0&1&6&6&3&6&1&1\\
		0&1&1&6&4&6&3&6\\
		\end{pmatrix}
		\end{equation*}
		By sovling the system of equations, we obtain the solution $\boldsymbol{x}=(1,0,6,0,6,1,3,3)^T$. Thus, $$N(X)=X^5+6X^4+6X^2+1\quad \mbox{and} \quad E(X)=X^2+3X+3.$$ It is easy to check that $E(X)\mid N(X)$. And hence, Algorithm~\ref{Algo:1} returns
        $$P(X)=\frac{N(X)}{E(X)}=X^3+3X^2+2X+5.$$ Finally, we obtain the transmitted codeword
		\begin{equation*}
		EV_{\mathcal{A},\boldsymbol{1}}(P(X))=\begin{pmatrix}
		P(1)&P(2)&P(3)&P(4)\\ \partial^{(1)}P(1)&\partial^{(1)}P(2)&\partial^{(1)}P(3)&\partial^{(1)}P(4)\\
		\end{pmatrix}=\begin{pmatrix}
		4&1&2&6\\4&5&5&4
		\end{pmatrix}.
		\end{equation*}
        So Algorithm~\ref{Algo:1} really recovers the correct codeword.
        
	%	Thus, $$\omega_{N}(\boldsymbol{y}-EV_{\mathcal{A},\boldsymbol{1}}(P(X)))=\omega_{N}\left(\begin{pmatrix}
	%	0&0&0&0\\1&0&1&0
	%	\end{pmatrix}\right)=2=e.$$ Finally, because $E(X)|N(X)$ and $\omega_{C(2,4)}(\boldsymbol{y}-EV_{\mathcal{A},\boldsymbol{1}}(P(X)))\leq e$,  Algorithm~\ref{Algo:1} returns $P(X)$.
	\end{example}

    Next we prove the correctness of the proposed Welch-Berlekamp Algorithm of NRT HRS codes.
    
    \begin{lemma}\label{Lem: exist W-B algorithm}
    	There exist a pair of polynomials $E_{1}(X)$ and $N_{1}(X)$ that satisfy Step~\ref{Algo1:Step 1} such that $N_{1}(X)=E_{1}(X)P(X)$.
    \end{lemma}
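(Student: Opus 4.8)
The plan is to construct the required pair explicitly from an error-locator polynomial, imitating the Hamming-metric Welch--Berlekamp proof but adapting the locator so that it records the \emph{depth} of the error in each column rather than merely which columns are corrupted. Write $Z=\boldsymbol{y}-EV_{\mathcal{A},\boldsymbol{1}}(P)$ for the error matrix, and for each $1\leq j\leq r$ let $e_{j}=\omega_{N}(Z_{j})$ be the NRT-weight of its $j$-th column, so that $\sum_{j=1}^{r}e_{j}\leq e$ by the decoding hypothesis. By the definition of the NRT-weight of a column, $e_{j}=s-u_{j}$, where $u_{j}$ counts the leading (top) error-free entries; concretely $y_{\kappa,j}=\partial^{(\kappa-1)}P(\alpha_{j})$ for every $\kappa\leq u_{j}=s-e_{j}$. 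This ``error-free prefix'' observation is exactly what will later be matched against the vanishing of the low-order hyperderivatives of the locator.

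Next I would set the error locator to be $E_{1}(X)=\prod_{j=1}^{r}(X-\alpha_{j})^{e_{j}}$, multiplied if necessary by a monic polynomial of degree $e-\sum_{j}e_{j}$ so that $\deg E_{1}=e$ exactly, and define $N_{1}(X)=E_{1}(X)P(X)$. The degree constraints demanded in Step~\ref{Algo1:Step 1} are then immediate: $\deg E_{1}=e$ and $\deg N_{1}\leq e+(t-1)$. Because the $\alpha_{j}$ are distinct, $\nu_{E_{1}}(\alpha_{i})\geq e_{i}$, so by the characterization of the order of vanishing recorded in consequence~(1) after the hyperderivative definition, $\partial^{(m)}E_{1}(\alpha_{i})=0$ for every $m<e_{i}$.

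The core of the argument is to verify the $s$ interpolation identities in Equation~\eqref{Algo1:step 1}. Applying the product rule of Lemma~\ref{Lem:Vandermonde identity} to $N_{1}=E_{1}P$ and reindexing by $j=\ell-k$ gives, for each $1\leq\ell\leq s$ and each $i$,
\begin{equation*}
\partial^{(\ell-1)}N_{1}(\alpha_{i})=\sum_{j=1}^{\ell}\partial^{(\ell-j)}E_{1}(\alpha_{i})\,\partial^{(j-1)}P(\alpha_{i}).
\end{equation*}
Subtracting the right-hand side of the algorithm's equation, it suffices to show that $\sum_{j=1}^{\ell}\partial^{(\ell-j)}E_{1}(\alpha_{i})\bigl(\partial^{(j-1)}P(\alpha_{i})-y_{j,i}\bigr)=0$, which I would establish term by term. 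Whenever $\partial^{(\ell-j)}E_{1}(\alpha_{i})\neq 0$, the vanishing order forces $\ell-j\geq e_{i}$, hence $j\leq\ell-e_{i}\leq s-e_{i}=u_{i}$ (using $\ell\leq s$); and for such $j\leq u_{i}$ the error-free prefix observation yields $\partial^{(j-1)}P(\alpha_{i})=y_{j,i}$, so the parenthesized factor vanishes. Every surviving term is thus killed, establishing the identity.

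I expect the main obstacle to be precisely this index bookkeeping: aligning the orders of the hyperderivatives produced by the product rule with the summation ranges in Equation~\eqref{Algo1:step 1}, and checking that the crucial inequality $\ell-e_{i}\leq u_{i}$---which hinges on $\ell\leq s$---always forces the surviving indices into the error-free prefix of column $i$. Once this is in place, the degree bounds and the relation $N_{1}=E_{1}P$ hold by construction, which completes the existence claim.
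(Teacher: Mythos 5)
Your proof is correct and follows essentially the same route as the paper: the same locator $E_{1}$ (whose vanishing order at each $\alpha_{i}$ is the NRT weight of the $i$-th error column, padded by a filler factor to degree exactly $e$), the same choice $N_{1}=E_{1}P$, and the same verification combining the hyperderivative product rule with the dichotomy ``entry lies in the error-free prefix'' versus ``low-order hyperderivative of the locator vanishes.'' The only difference is presentational: you read the column weights $e_{j}$ and the error-free prefixes directly off the error matrix $\boldsymbol{y}-EV_{\mathcal{A},\boldsymbol{1}}(P)$, whereas the paper encodes the same information through the Hermite interpolant $H$ of Theorem~\ref{The:2.3} and the vanishing orders $\nu_{Q}(\alpha_{i})$ of $Q=P-H$.
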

    \begin{proof}
    By Theorem~\ref{The:2.3}, there exists a  polynomial $H(X)\in\mathcal{P}(rs-1)$ such that $H(X)=H(X;\mathcal{A},\boldsymbol{y})$. Let $Q(X)=P(X)-H(X)$. Denote
    	\begin{equation}\label{equation:E(x)}
    	E_{1}(X)=X^{e-\Delta}\prod\limits_{1\leq i\leq r}(x-\alpha_{i})^{s-\nu_{Q}(\alpha_{i})},
    	\end{equation}
        where $\Delta=\omega_{N}(EV_{\mathcal{A},\boldsymbol{1}}(Q(X)))$. Let $N_{1}(X)=E_{1}(X)P(X)$, then $\deg(E_{1}(X))=e$ and $$\deg(N_{1}(X))=\deg(E_{1}(X))+\deg(P(X))\leq e+t-1.$$
    	
    	For $1\leq i\leq r$, if $1\leq j\leq \nu_{Q}(\alpha_{i})$, then $(X-\alpha_{i})^j|Q(X)$. So $$\partial^{(j-1)}(P(\alpha_{i}))=\partial^{(j-1)}H(\alpha_{i})=y_{j,i}.$$ 
    If $\nu_{Q}(\alpha_{i})<j\leq s$, then $s-j<s-\nu_{Q}(\alpha_{i})$. Thus, $$\partial^{(s-j)}E_{1}(\alpha_{i})=0.$$
    In summary, for all $1\leq i\leq r,1\leq j\leq s$, we have $$(y_{j,i}-\partial^{(j-1)}P(\alpha_{i}))\cdot \partial^{(s-j)}E_{1}(\alpha_{i})=0.$$ 
    So
    $$y_{j,i}\cdot \partial^{(s-j)}E_{1}(\alpha_{i})=\partial^{(j-1)}P(\alpha_{i})\cdot \partial^{(s-j)}E_{1}(\alpha_{i})$$ 
    and 
    $$\sum\limits_{j=1}^{s}y_{j,i}\partial^{(s-j)}E_{1}(\alpha_{i})=\sum\limits_{j=1}^{s}\partial^{(j-1)}P(\alpha_{i})\cdot\partial^{(s-j)}E_{1}(\alpha_{i}).$$
    By Lemma~\ref{Lem:Vandermonde identity}, we have 
    $$
    \sum\limits_{j=1}^{s}y_{j,i}\partial^{(s-j)}E_{1}(\alpha_{i})=\partial^{(s-1)}(P(\alpha_{i})\cdot E_{1}(\alpha_{i}))=\partial^{(s-1)}N_{1}(\alpha_{i}).
    $$
%    \begin{equation*}
 %   	\begin{aligned}
%    	\sum\limits_{j=1}^{s}y_{j,i}\partial^{(s-j)}E_{1}(\alpha_{i})&=\sum\limits_{j=1}^{s}\partial^{(j-1)}P(\alpha_{i})\cdot\partial^{(s-j)}E_{1}(\alpha_{i})\\
%    	&=\partial^{(s-1)}(P(\alpha_{i})\cdot E_{1}(\alpha_{i}))=\partial^{(s-1)}N_{1}(\alpha_{i}).
%    	\end{aligned}
%\end{equation*}
    	Similarly, for all $1\leq\ell\leq s-1,1\leq i\leq r,1\leq j\leq s-\ell$, we have
    	$$(y_{j,i}-\partial^{(j-1)}P(\alpha_{i}))\cdot \partial^{(s-\ell-j)}E_{1}(\alpha_{i})=0.$$ 
    	 Thus, \begin{equation*}
    	\begin{aligned}
    	\sum\limits_{j=1}^{s-\ell}y_{j,i}\partial^{(s-\ell-j)}E_{1}(\alpha_{i})&=\sum\limits_{j=1}^{s-\ell}\partial^{(j-1)}P(\alpha_{i})\cdot\partial^{(s-\ell-j)}E_{1}(\alpha_{i})=\partial^{(s-\ell-1)}N_{1}(\alpha_{i}).
    	\end{aligned}
    	\end{equation*} 
    \end{proof}
    
    Note that now it suffices to argue that$\frac{N_{1}(X)}{E_{1}(X)}=\frac{N_{2}(X)}{E_{2}(X)}$ for any pair of solutions $(N_1(X),E_1(X))$ and $(N_2(X),E_2(X))$ that satisfy Step~\ref{Algo1:Step 1}. Then by Lemma~\ref{Lem: exist W-B algorithm}, the ratio must be $P(X)$. 
    
    \begin{lemma}\label{Lem:unique}
    	For any two distinct solutions $(E_1(X),N_1(X))\neq (E_2(X),N_2(X))$ in Step~\ref{Algo1:Step 1}, they satisfy $\frac{N_{1}(X)}{E_{1}(X)}=\frac{N_{2}(X)}{E_{2}(X)}$.
    \end{lemma}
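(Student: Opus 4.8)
The plan is to show that the rational function $N(X)/E(X)$ is invariant across all valid solution pairs by cross-multiplying and proving the polynomial identity $N_1(X)E_2(X) = N_2(X)E_1(X)$. Define the difference polynomial $R(X) = N_1(X)E_2(X) - N_2(X)E_1(X)$. Since $\deg N_i \le e+t-1$ and $\deg E_i = e$, we have $\deg R \le 2e + t - 1$. The strategy is to exhibit enough vanishing conditions on $R(X)$ at the evaluation points $\alpha_i$ (counted with hyperderivative multiplicity) to force $R(X)\equiv 0$, since a nonzero polynomial of degree at most $2e+t-1$ cannot have more than $2e+t-1$ roots counted with multiplicity, and the hypothesis $e < (rs-t+1)/2$ gives $2e+t-1 < rs$, which is the total budget of interpolation conditions.

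First I would extract, from the defining system in Step~\ref{Algo1:Step 1}, the key fact that at each evaluation point $\alpha_i$ the polynomials $N_1$ and $N_2$ behave like $E_1 P$ and $E_2 P$ respectively, up to the hyperderivative order dictated by how many of the received entries $y_{j,i}$ they are forced to match. Concretely, using Lemma~\ref{Lem:Vandermonde identity} (the hyperderivative product rule), the system says precisely that $\partial^{(s-\ell-1)}N_k(\alpha_i) = \sum_{j=1}^{s-\ell}y_{j,i}\partial^{(s-\ell-j)}E_k(\alpha_i)$ for $0\le \ell \le s-1$. The plan is to compute $\partial^{(m)}R(\alpha_i)$ for each $m$ from $0$ up to $s-1$ via the product rule and substitute these relations; the cross terms involving the $y_{j,i}$ should cancel between the two halves $N_1 E_2$ and $N_2 E_1$, forcing $\partial^{(m)}R(\alpha_i)=0$ for all $0\le m\le s-1$ and all $1\le i\le r$.

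By the consequence (1) of the hyperderivative expansion quoted in the preliminaries, the vanishing of $\partial^{(m)}R(\alpha_i)$ for $m=0,\ldots,s-1$ means that $(X-\alpha_i)^s$ divides $R(X)$. Since the $\alpha_i$ are distinct, the product $\prod_{i=1}^r (X-\alpha_i)^s$, which has degree $rs$, divides $R(X)$. But $\deg R \le 2e+t-1 < rs$ by the error-bound hypothesis, so the only possibility is $R(X)\equiv 0$, i.e.\ $N_1 E_2 = N_2 E_1$, which is exactly the claimed equality of rational functions (after noting that the $E_k$ are nonzero so the quotients are well-defined).

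The main obstacle I anticipate is the bookkeeping in the second paragraph: verifying cleanly that the $y_{j,i}$-dependent terms cancel when one expands $\partial^{(m)}(N_1 E_2)(\alpha_i)$ and $\partial^{(m)}(N_2 E_1)(\alpha_i)$ using the product rule and then substitutes the Step~\ref{Algo1:Step 1} relations. The substitution replaces a hyperderivative of $N_k$ by a $y$-weighted sum of hyperderivatives of $E_k$, and one must match indices carefully so that the double sums for the two halves coincide term by term. I would organize this by writing $\partial^{(m)}R(\alpha_i)=\sum_{a+b=m}\big(\partial^{(a)}N_1(\alpha_i)\partial^{(b)}E_2(\alpha_i)-\partial^{(a)}N_2(\alpha_i)\partial^{(b)}E_1(\alpha_i)\big)$, then replacing each $\partial^{(a)}N_k(\alpha_i)$ using the system, and checking the resulting triple-indexed expression is symmetric in the roles of the two solutions and hence vanishes. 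Once that cancellation is established, the degree-counting conclusion is immediate.
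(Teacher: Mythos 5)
Your proposal is correct and follows essentially the same route as the paper's proof: define the cross-difference $N_1E_2 - N_2E_1$ (the paper's $F(X)$), show all hyperderivatives up to order $s-1$ vanish at each $\alpha_i$ by substituting the Step~\ref{Algo1:Step 1} relations into the product-rule expansion and observing the symmetric cancellation of the $y_{j,i}$-terms, then conclude $\prod_i(X-\alpha_i)^s$ divides a polynomial of degree at most $2e+t-1 < rs$, forcing it to be zero. The index-swapping cancellation you flag as the main bookkeeping obstacle is exactly the computation the paper carries out, identifying both double sums with $\sum_k y_{k,i}\,\partial^{(j-k)}\bigl(E_1E_2\bigr)(\alpha_i)$ via Lemma~\ref{Lem:Vandermonde identity}.
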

 \begin{proof}
 By Theorem~\ref{The:2.3}, there exists a  polynomial $H(X)\in\mathcal{P}(rs-1)$ such that $H(X)=H(X;\mathcal{A},\boldsymbol{y})$. Since $(N_1(X),E_{1}(X)$ and $(N_{2}(X),E_{2}(X))$ satisfy Step~\ref{Algo1:Step 1}, we have 
 	\begin{equation*}
 	\partial^{(\ell-1)}N_{k}(\alpha_{i})=\sum\limits_{j=1}^{\ell}y_{j,i}\partial^{(\ell-j)}E_{k}(\alpha_{i})
 	\end{equation*}
    for all $1\leq \ell\leq s,1\leq i\leq r$ and $k=1,2$.
 	
    Let $F(X)=N_{1}(X)E_{2}(X)-N_{2}(X)E_{1}(X)$. For $1\leq i\leq r$ and $1\leq j\leq s$, we have \begin{equation*}
 	\begin{aligned}
 	\partial^{(j-1)}F(\alpha_{i})&=\partial^{(j-1)}(N_{1}(\alpha_{i})E_{2}(\alpha_{i})-N_{2}(\alpha_{i})E_{1}(\alpha_{i}))\\
 	&=\sum\limits_{\ell=1}^{j}\partial^{(\ell-1)}N_{1}(\alpha_{i})\partial^{(j-\ell)}E_{2}(\alpha_{i})-
 	\sum\limits_{\ell=1}^{j}\partial^{(\ell-1)}N_{2}(\alpha_{i})\partial^{(j-\ell)}E_{1}(\alpha_{i})\\
 	&=\sum\limits_{\ell=1}^{j}\sum\limits_{k=1}^{\ell}y_{k,i}\partial^{(\ell-k)}E_{1}(\alpha_{i})\partial^{(j-\ell)}E_{2}(\alpha_{i})-
 	\sum\limits_{\ell=1}^{j}\sum\limits_{k=1}^{\ell}y_{k,i}\partial^{(\ell-k)}E_{2}(\alpha_{i})\partial^{(j-\ell)}E_{1}(\alpha_{i})\\
 	&=\sum\limits_{k=1}^{j}y_{k,i}\sum\limits_{\ell=k}^{j}\partial^{(\ell-k)}E_{1}(\alpha_{i})\partial^{(j-\ell)}E_{2}(\alpha_{i})-
 	\sum\limits_{k=1}^{j}y_{k,i}\sum\limits_{\ell=k}^{j}\partial^{(\ell-k)}E_{2}(\alpha_{i})\partial^{(j-\ell)}E_{1}(\alpha_{i})\\
 	&=\sum\limits_{k=1}^{j}y_{k,i}\partial^{(j-k)}\left(E_{1}(\alpha_{i})E_{2}(\alpha_{i})\right)-\sum\limits_{k=1}^{j}y_{k,i}\partial^{(j-k)}\left(E_{1}(\alpha_{i})E_{2}(\alpha_{i})\right)=0.
 	\end{aligned}.
 	\end{equation*}
 So $$\prod\limits_{1\leq i\leq r}(X-\alpha_{i})^{s}\mid F(X).$$ But the degree of $F(X)$ satisfies $\deg(F(X))\leq e+e+t-1<rs$. Therefore, $F(X)=0$ is the zero polynomial. Hence, $$\frac{N_{1}(X)}{E_{1}(X)}=\frac{N_{2}(X)}{E_{2}(X)}.$$
 \end{proof}
 From Lemma~\ref{Lem: exist W-B algorithm} and Lemma~\ref{Lem:unique}, we know that if Algorithm~\ref{Algo:1} does not output ``fail", then the algorithm produces a correct output. 
 
 Now, we analyze the running time of Algorithm~\ref{Algo:1}. In Step~\ref{Algo1:Step 1}, $N(X)$ has $e+t$ unknowns and $E(X)$ has $e+1$ unknowns. For each $1\leq i\leq r$, the constraint in (\ref{Algo1:step 1}) is a linear system of equations with $r$ equations in these unknowns. Thus, we have a system of $rs$ linear equations in $2e+t+1<rs+2$ unknowns. By Lemma~\ref{Lem: exist W-B algorithm}, this system of equations has a solution. The only additional requirement is that the degree of the polynomial $E(X)$ be exactly $e$. We have already shown $E(X)$ in equation~(\ref{equation:E(x)}) to satisfy this requirement. So we add a constraint that the coefficient of $X^e$ in $E(X)$ is $1$. Therefore, we have $rs+1$ linear equations in at most $rs+1$ variables, which we can solve in time $O(r^3s^3)$, e.g. by Gaussian elimination.
Finally, note that Step~\ref{step:5} can be implemented in time $O(r^3s^3)$ by “long division”. Thus, we obtain the following main Theorem.
    
    \begin{theorem}\label{unique theorem}
    If $EV_{\mathcal{A},\boldsymbol{1}}(P(X))\in \operatorname{Mat}_{s\times r}(\mathbb{F}_{q})$ is transmitted and at most $e\leq \frac{rs-t}{2}$ errors occur in the received word $\boldsymbol{y}\in \operatorname{Mat}_{s\times r}(\mathbb{F}_{q})$, i.e. $\omega_{C(s,r)}(\boldsymbol{y}-EV_{\mathcal{A},\boldsymbol{1}}(P(X)))\leq e$, where $P(X)$ is a polynomial of degree at most $t-1$, then the Welch-Berlekamp algorithm outputs $P(X )$. Moreover, the Welch-Berlekamp algorithm can be done in time $O(r^3s^3)$.  %up to $\frac{d-1}{2}=\frac{rs-t}{2}$ number of errors for any $[rs,t]_{q}$ NRT HRS code.
    \end{theorem}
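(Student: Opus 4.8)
The plan is to assemble the main theorem from the two structural lemmas already in hand, treating separately (i) correctness---that the algorithm does not return ``fail'' and outputs exactly $P(X)$---and (ii) the running-time bound. For correctness, the key observation is that Lemma~\ref{Lem: exist W-B algorithm} and Lemma~\ref{Lem:unique} together pin down every solution of Step~\ref{Algo1:Step 1}. Lemma~\ref{Lem: exist W-B algorithm} exhibits a pair $(E_1,N_1)$ with $N_1=E_1 P$, $\deg E_1=e$ and $\deg N_1\le e+t-1$; Lemma~\ref{Lem:unique} shows that any other solution $(E,N)$ satisfies $N/E=N_1/E_1$. Combining them, any Step~\ref{Algo1:Step 1} solution obeys the polynomial identity $N=E\cdot P$, so $E\mid N$ and $N/E=P$. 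Thus, whatever solution the solver returns, the division in Step~\ref{step:5} succeeds and produces precisely $P(X)$.

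First I would make the linear algebra explicit and argue solvability. Writing $E(X)$ with $e+1$ unknown coefficients and $N(X)$ with $e+t$ unknown coefficients, the $s$ equations of~\eqref{Algo1:step 1} for each of the $r$ evaluation points form a linear system of $rs$ equations in $2e+t+1$ unknowns; since $e\le(rs-t)/2$ gives $2e+t\le rs$, there are at most $rs+1$ unknowns. To force $\deg E=e$ exactly I would append the normalization that the coefficient of $X^{e}$ in $E$ equals $1$, yielding $rs+1$ equations. This normalized system is consistent because the constructed $E_1$ has nonzero leading coefficient---its degree is exactly $e$, as the exponents in~\eqref{equation:E(x)} sum to $(e-\Delta)+\sum_{i}(s-\nu_Q(\alpha_i))=(e-\Delta)+\Delta=e$---so rescaling $(E_1,N_1)$ gives a solution meeting the normalization. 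Hence a solution is found, and by the previous paragraph it must have ratio $P$.

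Next I would verify that the final NRT-weight test is passed, so that the algorithm returns $P$ rather than ``fail''. Here I use the hypothesis directly: since $P(X)$ is the transmitted polynomial and at most $e$ errors occurred, $\omega_{N}(EV_{\mathcal{A},\boldsymbol{1}}(P(X))-\boldsymbol{y})\le e$, which is exactly the quantity tested, so the test succeeds and $P(X)$ is output. For the running time, Gaussian elimination on the $(rs+1)$-variable system costs $O((rs)^3)=O(r^3s^3)$; the polynomial division $N/E$ in Step~\ref{step:5} costs $O((rs)^2)$; and recomputing $EV_{\mathcal{A},\boldsymbol{1}}(P)$ together with its NRT-weight amounts to evaluating hyperderivatives of orders $0,\dots,s-1$ at the $r$ points, again within $O(r^3s^3)$. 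Summing gives the claimed $O(r^3s^3)$.

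The conceptual content lives entirely in the two lemmas, so the main thing to be careful about here is the bookkeeping that glues them to the algorithm: confirming that the normalization $b_e=1$ is compatible with the existence result (which needs $\deg E_1=e$, hence $\Delta=\omega_{N}(EV_{\mathcal{A},\boldsymbol{1}}(Q))\le e$, i.e.\ the error bound), and confirming that the solver's output---which need not equal the explicit $(E_1,N_1)$---still yields $P$ via the uniqueness of the ratio. The only genuinely external input is the identity $EV_{\mathcal{A},\boldsymbol{1}}(Q)=EV_{\mathcal{A},\boldsymbol{1}}(P)-\boldsymbol{y}$ for $Q=P-H$, which converts the hypothesis ``at most $e$ errors'' into $\Delta\le e$ and thereby makes the exponent $e-\Delta$ in~\eqref{equation:E(x)} nonnegative; everything else is routine degree counting and complexity accounting.
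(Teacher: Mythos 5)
Your proof follows essentially the same route as the paper's: it glues Lemma~\ref{Lem: exist W-B algorithm} (existence of a solution with $N_1=E_1P$) to Lemma~\ref{Lem:unique} (uniqueness of the ratio), handles the degree normalization $b_e=1$ via the same observation that $\deg E_1=(e-\Delta)+\Delta=e$, and bounds the cost by Gaussian elimination on the $(rs+1)$-variable system plus polynomial division. If anything, you are slightly more careful than the paper in two spots the paper leaves implicit---checking that the final test $\omega_{N}(EV_{\mathcal{A},\boldsymbol{1}}(P(X))-\boldsymbol{y})\le e$ passes (so the algorithm returns $P$ rather than ``fail''), and noting that $\Delta\le e$ is what makes the exponent $e-\Delta$ in~\eqref{equation:E(x)} legitimate---but these are refinements of the same argument, not a different one.
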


\section{Conclusion}
In this paper, we study the decoding problem of NRT HRS codes. We propose a Welch-Berkamp Algorithm for the unique decoding of NRT HRS codes which can correct errors with NRT metric up to the unique decoding radius in polynomial time. Finally, we point out that the Welch-Berkamp Algorithm proposed in this paper can be extended to a list-decoder for NRT HRS codes~\cite{GZ25}.

\bibliographystyle{plain}
\bibliography{HRS-JAAG}
\end{document}